\tikzstyle{hyb}=[rectangle,draw,minimum
\tikzstyle{tre}=[circle,draw] \tikzstyle{tri}=[regular
\tikzstyle{indef}=[regular polygon,regular polygon sides=5,draw]
\newcommand{\etq}[1]{ \draw (#1) node {\scriptsize $#1$};}
\tikzstyle{path}=[snake=coil,segment aspect=0,->]
 \renewcommand{\geq}{\geqslant}
\begin{document}

\title{%
  A metric for galled networks \thanks{%
    This work has been partially supported by the Spanish Government
    and FEDER funds, through project MTM2009-07165 and
    TIN2008-04487-E/TIN.  } }

\author{ Gabriel Cardona
  \and Merc\`e Llabr\'es
  \and Francesc Rossell\'o
}

\institute{ Department of Mathematics \& Computer Science, University
  of Balearic Islands,\\Crta. Valldemossa, km. 7,5. 07122 -- Palma de
  Mallorca,
  Spain. \url{{gabriel.cardona,merce.llabres,cesc.rossello}@uib.es}
%
}
\maketitle

\begin{abstract}
  Galled networks, directed acyclic graphs that model evolutionary
  histories with reticulation cycles containing only tree nodes, have
  become very popular due to both their biological significance and
  the existence of polynomial time algorithms for their
  reconstruction. In this paper we prove that Nakhleh's $m$ measure is
  a metric for this class of phylogenetic networks and hence it can be
  safely used to evaluate galled network reconstruction methods.
\end{abstract}

\setcounter{footnote}{0}
\section{Introduction}

Phylogenetic networks have been studied over the last years as a
richer model of the evolutionary history of sets of organisms than
phylogenetic trees, because they take into account not only mutation
events but also reticulation events, like recombinations,
hybridizations, and lateral gene transfers.  Technically, it is
accomplished by modifying the concept of phylogenetic tree in order to
allow the existence of nodes with in-degree greater than one.  As a
consequence, much progress has been made to find practical algorithms
for reconstructing a phylogenetic network from a set of sequences or
other types of evolutive information.  Since different reconstruction
methods applied to the same sequences, or a single method applied to
different sequences, may yield different phylogenetic networks for a
given set of species, a sound measure to compare phylogenetic networks
becomes necessary~\cite{moret.ea:2004}.  The comparison of
phylogenetic networks is also needed in the assessment of phylogenetic
reconstruction methods \cite{moret:alenex05}, and it will be required
to perform queries on future databases of phylogenetic
networks~\cite{page:05}.

Several distances for the comparison of phylogenetic networks have
been proposed so far in the literature, including generalizations to
networks of the Robinson-Foulds distance for trees, like the
tripartitions distance \cite{moret.ea:2004} and the $\mu$-distance
\cite{CLRV:bioinfo,CRV:mu}, and different types of nodal distances
\cite{CLRV:comp2,CLRV:nodnet}. All polynomial time computable
distances for phylogenetic networks introduced up to now do not
separate arbitrary phylogenetic networks, that is, zero distance does
not imply in general isomorphism. Of course, this is consistent with
the equivalence between the isomorphism problems for phylogenetic
networks and for graphs, and the general belief that the latter lies
in NP$-$P.  Therefore one has to study for which interesting classes
of phylogenetic networks these distances are metrics in the precise
mathematical sense of the term. The interest of the classes under
study may stem from their biological significance, or from the
existence of reconstruction algorithms.

This work contributes to this line of research. We prove that a
distance introduced recently by Nakhleh \cite{Nak:m} separates
semibinary galled networks (roughly speaking, networks where every
node of in-degree greater than one has in-degree exactly two and every
reticulation cycle has only one hybrid node; see the next section for
the exact definition, \cite{HK07,HR+08} for a discussion of the
biological meaning of this condition, and \cite{Nak.ea:06,Nak.ea:05}
for reconstruction algorithms). In this way, this distance turns out
to be the only non-trivial metric available so far on this class of
networks that is computable in polynomial-time.

\section{Preliminaries}

Given a set $S$ of \emph{labels}, a \emph{$S$-DAG} is a directed
acyclic graph with its leaves bijectively labelled by $S$.  In a
$S$-DAG, we shall always identify without any further reference every
leaf with its label.

Let $N=(V,E)$ be a $S$-DAG. A node is a \emph{leaf} if it has
out-degree $0$ and \emph{internal} otherwise, a \emph{root} if it has
in-degree $0$, of \emph{tree} type if its in-degree is $\le 1$, and of
\emph{hybrid} type if its in-degree is $> 1$.  $N$ is \emph{rooted}
when it has a single root.  A node $v$ is a \emph{child} of another
node $u$ (and hence $u$ is a \emph{parent} of $v$) if $(u,v)\in
E$. Two nodes with a parent in common are \emph{sibling} of each
other. A node $v$ is a \emph{descendant} of a node $u$ when there
exists a path from $u$ to $v$: we shall also say in this case that $u$
is an \emph{ancestor} of $v$.  The \emph{height} $h(v)$ of a node $v$
is the largest length of a path from $v$ to a leaf.

A \emph{phylogenetic network} on a set $S$ of \emph{taxa} is a rooted
$S$-DAG such that no tree node has out-degree $1$ and every hybrid
node has out-degree $1$. A \emph{phylogenetic tree} is a phylogenetic
network without hybrid nodes.   A \emph{reticulation cycle} in a
phylogenetic network is a pair of 
  internally disjoint paths from a tree node (its \emph{source}) to a
  hybrid node (its \emph{target}).

The underlying biological motivation for these definitions is that
tree nodes model species (either extant, the leaves, or non-extant,
the internal tree nodes), while hybrid nodes model reticulation
events. The parents of a hybrid node represent the species involved in
this event and its single child represents the resulting species (if
it is a tree node) or a new reticulation event where this resulting
species gets involved into without yielding any other descendant (if
the child is a hybrid node). The tree children of a tree node
represent direct descendants through mutation. The absence of
out-degree 1 tree nodes in phylogenetic network means that every
non-extant species has at least two different direct descendants. This
is a very common restriction in any definition of phylogeny, since
species with only one child cannot be reconstructed from biological
data.

Many restrictions have been added to this definition. Let us introduce
now some of them. For more information on these restrictions,
including their biological or technical motivation, see the references
accompanying them.

\begin{itemize}
\item A phylogenetic network is \emph{semibinary} if every hybrid node
  has in-degree $2$ \cite{CLRV:bioinfo}, and \emph{binary} if it is
  semibinary and every internal tree node has out-degree $2$.

\item A phylogenetic network is a \emph{galled network}, when every
  non-target node in every reticulation is a tree
  node~\cite{HK07,HR+08}.

\end{itemize}

Two hybridization networks $N,N'$ are \emph{isomorphic}, in symbols
$N\cong N'$, when they are isomorphic as directed graphs and the
isomorphism sends each leaf of $N$ to the leaf with the same label in
$N'$.

\section{On Nakhleh's distance $m$}

Let us recall the distance $m$ introduced by Nakhleh in
\cite{Nak:m}, 
in the version described in \cite{CLRV:m}.  Let $N=(V,E)$ be a
phylogenetic network on a set $S$ of taxa. For every node $v\in V$,
its \emph{nested label} $\lambda_N(v)$ (or simply $\lambda(v)$ when
there is no risk of confusion) is defined by recurrence as
follows:
\begin{itemize}
\item If $v$ is the leaf labelled $i$, then $\lambda_N(v)=\{i\}$.

\item If $v$ is internal and all its children $v_1,\ldots,v_k$ have
  been already labelled, then $\lambda_N(v)$ is the multiset
  $\{\lambda_N(v_1),\ldots,\lambda_N(v_k)\}$ of their labels.
\end{itemize}
The absence of cycles in $N$ entails that this labelling is
well-defined.

Notice that the nested label of a node is, in general, a nested
multiset (a multiset of multisets of multisets of\ldots), hence its
name.  Moreover, the height of a node $u$ is the highest level of
nesting of a leaf in $\lambda(u)$ minus 1.

Now, it is easy to prove from the nested label definition, the
following result.
\begin{lemma}
  Let $N=(V,E)$ be a phylogenetic network on a set $S$ of taxa.
  \begin{itemize}
  \item If $(u,v)\in E$, then $\lambda(v)\in \lambda(u)$;
  \item If there is a path from $u$ to $v$, then there exists a set of
    nodes $u_1,...,u_k$ such that $u_1=u$, $u_k=v$ and
    $\lambda(u_i)\in \lambda(u_{i+1})$ for every $i=1,...,k-1$.\qed
  \end{itemize}
\end{lemma}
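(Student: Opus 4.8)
The plan is to obtain both items by unwinding the recursive definition of the nested labelling: the first item is a single step of this unwinding, and the second follows from it by chaining along the path.

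For the first item, let $(u,v)\in E$. Then $v$ is a child of $u$; in particular $u$ is internal, so it has children $v_1,\dots,v_k$, and $v=v_j$ for some $j$. By the defining clause for internal nodes, $\lambda(u)$ is the multiset $\{\lambda(v_1),\dots,\lambda(v_k)\}$, of which $\lambda(v)=\lambda(v_j)$ is an element; hence $\lambda(v)\in\lambda(u)$. It is irrelevant here whether $u$ (or $v$) is of tree or of hybrid type, since the definition of $\lambda$ treats every internal node in the same way, through the multiset of its children's labels. The only point that could a priori cause trouble — that $\lambda$ is well defined at all — has already been settled by the acyclicity of $N$, as recorded just before the statement.

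For the second item, take a path from $u$ to $v$, say $u=w_0\to w_1\to\dots\to w_\ell=v$, and apply the first item to each of its edges $(w_{i-1},w_i)$; this yields, for every $i$, the membership $\lambda(w_i)\in\lambda(w_{i-1})$. The nodes of the path, read in the order dictated by these memberships, are then the desired sequence $u_1,\dots,u_k$ (with $k=\ell+1$): its two ends are $u$ and $v$, and each consecutive pair is an edge of the path, hence satisfies the required membership by the first item. Formally this is an induction on $\ell$: the base case $\ell=0$ gives the one-term sequence with $u=v$, for which the condition is vacuous, and the inductive step attaches the remaining edge to the sequence obtained for a path of length $\ell-1$.

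There is no genuine obstacle in this argument; it is a bookkeeping exercise on the definition. The only care needed is to keep the direction of the $\in$ relation straight — a child's nested label is an \emph{element} of its parent's, so along a directed path the memberships accumulate from the target end towards the source end — and to note that the intermediate nodes of the path may be of hybrid type, which changes nothing.
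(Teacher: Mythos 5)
Your proof is correct and is exactly the argument the paper has in mind: the paper gives no proof at all (the lemma is stated as ``easy to prove from the nested label definition'' and closed with \emph{qed}), and your single-step unwinding of the recursive definition followed by chaining along the path is the intended reasoning. The only point worth flagging --- which you already notice yourself --- is that the lemma's indexing ($u_1=u$, $u_k=v$, $\lambda(u_i)\in\lambda(u_{i+1})$) is consistent only if the sequence is the path read from $v$ back towards $u$, or else the membership should read $\lambda(u_{i+1})\in\lambda(u_i)$; this is evidently a slip in the statement rather than a gap in your argument.
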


The \emph{nested labels representation} of $N$ is the multiset
$$
\lambda(N)=\{\lambda_N(v)\mid v\in V\},
$$
where each nested label appears with multiplicity the number of nodes
having it as nested label.  \emph{Nakhleh's distance} $m$ between a
pair of phylogenetic networks $N,N'$ on a same set $S$ of taxa is then
$$
m(N,N')=|\lambda(N)\bigtriangleup \lambda(N')|,
$$
where the symmetric difference and the cardinal refer to multisets.

This distance trivially satisfies all axioms of metrics except, at
most, the separation axiom, and thus this is the key property that has
to be checked on some class of networks in order to guarantee that $m$
is a metric on it.  So far, this distance $m$ is known to be a metric
for reduced networks \cite{Nak:m}, tree-child networks \cite{CLRV:m},
and semibinary tree-sibling time consistent networks \cite{CLRV:m}
(always on any fixed set of labels $S$). It is not a metric for
arbitrary tree-sibling time consistent networks \cite{CLRV:m}. And, we
will prove here, that it is a metric for semibinary galled networks,
which implies that it is also a metric for galled trees and 1-nested
networks.

\section{The distance $m$ for galled networks}
\label{sec:distance-m-galled}

In this section we prove that the distance $m$ defined above separates
galled networks up to isomorphism.

First of all, notice that if a galled network $N=(V,E)$ has no pair of
different nodes with the same nested label, then for every pair of
nodes $u,v\in V$, we have that $(u,v)\in E$ iff $\lambda_N(v)\in
\lambda_N(u)$. Indeed, on the one hand, the very definition of nested
label entails that if $(u,v)\in E$, then $\lambda_N(v)\in
\lambda_N(u)$; and conversely, if $\lambda_N(v)\in \lambda_N(u)$, then
$u$ has a child $v'$ such that $\lambda_N(v')=\lambda_N(v)$, and by
the injectivity of nested labels, it must happen that $v=v'$.

This clearly implies that a galled network without any pair of
different nodes with the same nested label can be reconstructed, up to
isomorphisms, from its nested labels representation, and hence that
non-isomorphic galled networks without any pair of different nodes
with the same nested label always have different nested label
representations. Therefore, it remains to prove the separation axiom
of Nakhleh's distance for galled networks with some pair of different
nodes with the same nested label.

The general result will be proved by algebraic induction on the number
of pairs of different nodes with the same nested label. To this end,
we introduce a pair of reduction procedures that decrease the number
of pairs of different nodes with the same nested label in a semibinary
galled network.  Each of these reductions, when applied to a galled
network with $n$ leaves and with at least one pair of
different nodes with the same nested label, produces a galled network
with $n$ leaves and one pair less of different nodes with the same
nested label. Moreover, given any galled network with more than one
leaf and with at least one pair of different nodes with the same
nested label, it is always possible to apply to it some of these
reductions.

\smallskip
\begin{enumerate}
\item[($\mathbf{R}$)] Let $N$ be a galled network, let $u\neq v$ be a
  pair of sibling nodes such that $ \lambda(u)= \lambda(v)$ and assume
  that $u$ and $v$ have the same children which are the hybrid nodes
  $h_1,...,h_k$. The \emph{$R_{u;v;h_1,...,h_k}$ reduction} of $N$ is
  the network $R_{u;v;h_1,...,h_k}(N)$ obtained by removing the nodes
  $u,v,h_1,...,h_k$, together with their incoming arcs, and adding an
  arc from the parent of $u$ and $v$ to each child of $h_1,...,h_k$;
  cf.~Fig.~\ref{fig:R-red}.\footnote{In graphical representations of
    hybridization networks, we shall represent hybrid nodes by
    squares, tree nodes by circles, and indeterminate (that is, that
    can be of tree or hybrid type) nodes by pentagons.}

  \smallskip

\item[($\mathbf{T}$)] Let $N$ be a galled network, let $u\neq v$ be a
  pair of no sibling nodes such that $ \lambda(u)= \lambda(v)$ and
  assume that $u$ and $v$ have the same children which are the hybrid
  nodes $h_1,...,h_k$. Let $x,y$ be the parents of $u,v$ respectively,
  and notice that these nodes must be of tree type, since otherwise
  $N$ would contain a reticulation cycle with hybrid internal
  nodes. The \emph{$T_{u;v;h_1,...,h_k}$ reduction} of $N$ is the
  network $T_{u;v;h_1,...,h_k}(N)$ obtained by removing the nodes
  $u,v,h_1,...,h_k$, together with their incoming arcs, and adding a
  hybrid node $h$ with a tree child $w$ and arcs from $x$ and $y$ to
  $h$, from $h$ to $w$, and from $w$ to each child of $h_1,...,h_k$;
  cf.~Fig.~\ref{fig:T-red}.
\end{enumerate}

Notice that in both cases, the resulting network is a galled network
since, in the first case, we simply remove hybrid nodes, and in the
second one, we simply replace $k$ hybridization cycles by only one,
without adding any hybrid intermediate node. Also, in both cases the
number of pairs of different nodes with the same nested label
decreases in a unit. Last, we remark that in any case, the nodes
$w_1,\dots,w_k$ in the resulting network have disjoint sets of
descendants. Indeed, if some different nodes $w_i$ and $w_j$ share a
descendant $y$, then there exists a common hybrid descendant $h$ and a
reticulation cycle having $h$ as its target and $x$ as its
source. This cycle would induce a cycle in the original network that
would contain hybrid nodes, hence yielding a contradiction.

\begin{figure}
  \begin{tikzpicture}[scale=0.3]
  \end{tikzpicture}
\end{figure}

\begin{figure}[htb]
  \centering
  \begin{tikzpicture}[thick,>=stealth,scale=0.3]
    \draw (0,2) node[tri] (xx) {}; \draw (0,0) node[tre] (x) {};
    \etq{x} \draw (-4,-2) node[tre] (u) {};\etq{u} \draw (4,-2)
    node[tre] (v) {};\etq{v} \draw (7,-2) node (c) {$\cdots$}; \draw
    (-4,-4) node[hyb] (h_1) {};\etq{h_1} \draw (0,-4) node (d)
    {$\cdots$}; \draw (4,-4) node[hyb] (h_k) {};\etq{h_k} \draw
    (-4,-6) node[tre] (w_1) {}; \etq{w_1} \draw (0,-6) node
    {$\cdots$}; \draw (4,-6) node[tre] (w_k) {}; \etq{w_k} \draw
    (-6,-8) node[tre] (x_1) {}; \draw (-4,-8) node {$\cdots$}; \draw
    (-2,-8) node[tre] (x_2) {}; \draw (0,-8) node {$\cdots$}; \draw
    (2,-8) node[tre] (y_1) {}; \draw (4,-8) node {$\cdots$}; \draw
    (6,-8) node[tre] (y_2) {};

    \draw[->](xx) to (x); \draw[->](x) to (u); \draw[->](x) to (v);
    \draw[->](x) to (c); \draw[->](u) to (h_1); \draw[->](u) to (h_k);
    \draw[->](u) to (d); \draw[->](v) to (h_1); \draw[->](v) to (h_k);
    \draw[->](v) to (d); \draw[->](h_1) to (w_1); \draw[->](h_k) to
    (w_k); \draw[->](w_1) to (x_1); \draw[->](w_1) to (x_2);
    \draw[->](w_k) to (y_1); \draw[->](w_k) to (y_2);

\end{tikzpicture}
\qquad
\begin{tikzpicture}[thick,>=stealth,scale=0.3]
  \draw(0,-2) node{\ }; \draw(0,0) node {$\Rightarrow$}; \draw(0,2)
  node{\ };
\end{tikzpicture}
\qquad
\begin{tikzpicture}[thick,>=stealth,scale=0.3]
  \draw (0,2) node[tri] (xx) {}; \draw (0,0) node[tri] (x) {}; \etq{x}
  \draw (-4,-3) node[tre] (w_1) {}; \etq{w_1} \draw (0,-3) node (c)
  {$\cdots$}; \draw (4,-3) node[tre] (w_k) {}; \etq{w_k} \draw (6,-2)
  node (d) {$\cdots$}; \draw (-6,-5) node[tre] (x_1) {}; \draw (-4,-5)
  node {$\cdots$}; \draw (-2,-5) node[tre] (x_2) {}; \draw (0,-5) node
  {$\cdots$}; \draw (2,-5) node[tre] (y_1) {}; \draw (4,-5) node
  {$\cdots$}; \draw (6,-5) node[tre] (y_2) {};

  \draw[->](xx) to (x); \draw[->](x) to (w_1); \draw[->](x) to (w_k);
  \draw[->](x) to (c); \draw[->](x) to (d); \draw[->](w_1) to (x_1);
  \draw[->](w_1) to (x_2); \draw[->](w_k) to (y_1); \draw[->](w_k) to
  (y_2);
\end{tikzpicture}

\caption{\label{fig:R-red} The $R_{u,v;h_1,\dots,h_k}$ reduction.}
\end{figure}

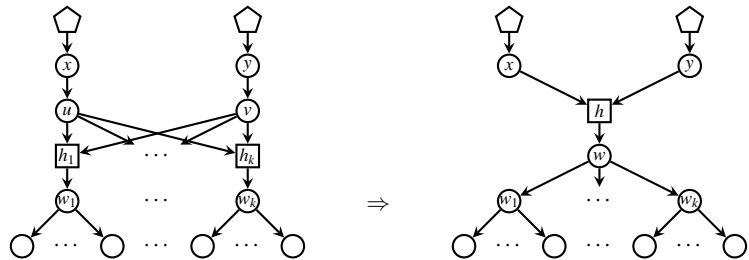
\begin{figure}[htb]
  \centering
  \begin{tikzpicture}[thick,>=stealth,scale=0.3]
    \draw (-4,2) node[tri] (xx) {}; \draw (4,2) node[tri] (yy) {};
    \draw (-4,0) node[tre] (x) {}; \etq{x} \draw (4,0) node[tre] (y)
    {}; \etq{y} \draw (-4,-2) node[tre] (u) {};\etq{u} \draw (4,-2)
    node[tre] (v) {};\etq{v}
    \draw (-4,-4) node[hyb] (h_1) {};\etq{h_1} \draw (0,-4) node (d)
    {$\cdots$}; \draw (4,-4) node[hyb] (h_k) {};\etq{h_k} \draw
    (-4,-6) node[tre] (w_1) {}; \etq{w_1} \draw (0,-6) node
    {$\cdots$}; \draw (4,-6) node[tre] (w_k) {}; \etq{w_k} \draw
    (-6,-8) node[tre] (x_1) {}; \draw (-4,-8) node {$\cdots$}; \draw
    (-2,-8) node[tre] (x_2) {}; \draw (0,-8) node {$\cdots$}; \draw
    (2,-8) node[tre] (y_1) {}; \draw (4,-8) node {$\cdots$}; \draw
    (6,-8) node[tre] (y_2) {};

    \draw[->](xx) to (x); \draw[->](yy) to (y); \draw[->](x) to (u);
    \draw[->](y) to (v); \draw[->](u) to (h_1); \draw[->](u) to (h_k);
    \draw[->](u) to (d); \draw[->](v) to (h_1); \draw[->](v) to (h_k);
    \draw[->](v) to (d); \draw[->](h_1) to (w_1); \draw[->](h_k) to
    (w_k); \draw[->](w_1) to (x_1); \draw[->](w_1) to (x_2);
    \draw[->](w_k) to (y_1); \draw[->](w_k) to (y_2);

\end{tikzpicture}
\qquad
\begin{tikzpicture}[thick,>=stealth,scale=0.3]
  \draw(0,-2) node{\ }; \draw(0,0) node {$\Rightarrow$}; \draw(0,2)
  node{\ };
\end{tikzpicture}
\qquad
\begin{tikzpicture}[thick,>=stealth,scale=0.3]
  \draw (-4,2) node[tri] (xx) {}; \draw (4,2) node[tri] (yy) {}; \draw
  (-4,0) node[tre] (x) {}; \etq{x} \draw (4,0) node[tre] (y) {};
  \etq{y} \draw (0,-2) node[hyb] (h) {}; \etq{h} \draw (0,-4)
  node[tre] (w) {}; \etq{w} \draw (-4,-6) node[tre] (w_1) {};
  \etq{w_1} \draw (0,-6) node (c) {$\cdots$}; \draw (4,-6) node[tre]
  (w_k) {}; \etq{w_k}
  \draw (-6,-8) node[tre] (x_1) {}; \draw (-4,-8) node {$\cdots$};
  \draw (-2,-8) node[tre] (x_2) {}; \draw (0,-8) node {$\cdots$};
  \draw (2,-8) node[tre] (y_1) {}; \draw (4,-8) node {$\cdots$}; \draw
  (6,-8) node[tre] (y_2) {};

  \draw[->](xx) to (x); \draw[->](yy) to (y); \draw[->](x) to (h);
  \draw[->](y) to (h); \draw[->](h) to (w); \draw[->](w) to (w_1);
  \draw[->](w) to (c); \draw[->](w) to (w_k); \draw[->](w_1) to (x_1);
  \draw[->](w_1) to (x_2); \draw[->](w_k) to (y_1); \draw[->](w_k) to
  (y_2);
\end{tikzpicture}

\caption{\label{fig:T-red} The $T_{u,v;h_1,\dots,h_k}$ reduction.}
\end{figure}

Now we have the following basic applicability result.

\begin{proposition}\label{thm:red}
  Let $N$ be a galled network with a pair of different nodes with the
  same nested label.
 Then, at least one $R$ or $T$ reduction can be applied to $N$, and
  the result is a galled network.
\end{proposition}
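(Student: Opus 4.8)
The plan is to take a pair of different nodes $u\neq v$ with $\lambda(u)=\lambda(v)$ and, using this equality, arrange that one of the two reductions applies. First I would observe that among all such pairs we may choose one minimizing the common height $h(u)=h(v)$ (heights agree because, as noted in the preliminaries, the height equals the depth of nesting in $\lambda$ minus one). Minimality means that $u$ and $v$ cannot have proper descendants with coinciding nested labels other than through $u$ and $v$ themselves; in particular every pair of children $u_i$ of $u$ and $v_j$ of $v$ with $\lambda(u_i)=\lambda(v_j)$ must actually be the \emph{same} node.

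The core combinatorial step is then: since $\lambda(u)=\{\lambda(u_1),\dots,\lambda(u_k)\}=\{\lambda(v_1),\dots,\lambda(v_\ell)\}$ as multisets, there is a bijection between the children of $u$ and those of $v$ matching nested labels, and by the minimality argument each matched pair is literally the same node. Hence $u$ and $v$ have exactly the same set of children $h_1,\dots,h_k$. Each $h_i$ then has at least two parents ($u$ and $v$), so it is a hybrid node; since $N$ is semibinary, $h_i$ has in-degree exactly $2$, so $u,v$ are its only parents, and as a hybrid node it has out-degree $1$, giving it a single child $w_i$. This is precisely the configuration in the statements of ($\mathbf{R}$) and ($\mathbf{T}$): $u,v$ are siblings in the first case, non-siblings in the second, and in the latter case their parents $x,y$ are forced to be tree nodes, since a hybrid parent of $u$ together with the $u$--$h_i$--edges would create a reticulation cycle through a hybrid internal node, contradicting the galled hypothesis. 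Thus at least one of $R_{u;v;h_1,\dots,h_k}$ or $T_{u;v;h_1,\dots,h_k}$ is applicable.

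Finally, that the output is again a galled network is exactly the remark already made in the text following the definition of the reductions: the $R$ reduction only deletes hybrid nodes and redirects arcs from a tree node, creating no new reticulation cycle; the $T$ reduction replaces $k$ one-hybrid reticulation cycles by a single one, again adding no hybrid internal node, and the descendants sets of the $w_i$ are disjoint so no new cycle through a hybrid node is created. It remains only to check the boundary conditions of the definitions of phylogenetic network (no out-degree-$1$ tree node, every hybrid of out-degree $1$): in the $R$ case the new arcs into the children of the $h_i$ keep their in-degree, and the parent of $u,v$ may need no out-degree adjustment as long as $u\neq v$ were not its only children — here one uses that each $h_i$ had out-degree $1$ and $w_i$ are distinct, so the new arcs do not change out-degrees problematically; in the $T$ case the new tree node $w$ has the $w_i$ as children with $k\geq 1$, and $w=w_1$ trivially when $k=1$ should be treated by a direct edge, which is the degenerate version already implicit in Fig.~\ref{fig:T-red}.

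I expect the main obstacle to be the careful bookkeeping needed to conclude that $u$ and $v$ have \emph{exactly} the same children (not just matching nested labels): this is where the minimal-height choice, the injectivity-style argument from the preamble, and the galled/semibinary hypotheses all have to be combined correctly, and where a sloppy argument could miss the case in which some child $h_i$ of $u$ is distinct from the corresponding child of $v$ yet shares its nested label, which would block both reductions.
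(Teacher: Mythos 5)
Your proposal is correct and follows essentially the same route as the paper: choose a pair of equal-labelled nodes of minimal height, use minimality to force the matched children to coincide (hence to be hybrid with $u,v$ as their only parents by semibinarity), and then apply $R$ or $T$ according to whether $u$ and $v$ are siblings. Your extra remarks on out-degree bookkeeping and the tree-type parents in the $T$ case are consistent with (and slightly more careful than) the paper's treatment.
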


\begin{proof}
  Let $N$ be a galled network with a pair of nodes $u\neq v$ such that
  $\lambda_N(u)=\lambda_N(v)$. Without any loss of generality, we
  assume that $v$ is a node of smallest height among those nodes with
  the same nested label as some other node. By definition, $v$ cannot
  be a leaf, because the only node with nested label $\{i\}$, with
  $i\in S$, is the leaf labelled $i$. Therefore $v$ is internal: let
  $v_1,\ldots,v_k$ ($k\geq 1$) be its children, so that
  $\lambda_N(v)=\{\lambda_N(v_1),\ldots, \lambda_N(v_k)\}$.  Since
  $\lambda_N(u)=\lambda_N(v)$, $u$ has $k$ children, say
  $u_1,\ldots,u_k$, and they are such that
  $\lambda_N(v_i)=\lambda_N(u_i)$ for every $i=1,\ldots,k$. Then,
  since $v_1,\ldots,v_k$ have smaller height than $v$ and by
  assumption $v$ is a node of smallest height among those nodes with
  the same nested label as some other node, we deduce that $v_i=u_i$
  for every $i=1,\ldots,k$. Therefore, $v_1,\ldots,v_k$ are hybrid,
  and their only parents (by the semibinarity condition) are $u$ and
  $v$.  Hence, we can apply the $R$ reduction when $u$ and $v$ are
  sibling and the $T$ reduction when they have different parents.

  The fact that the result of the application of a $R$ or a $T$
  reduction to $N$ is again a galled network has been discussed in the
  definition of the reductions.\qed
\end{proof}

We shall call the inverses of the $R$ and $T$ reductions,
respectively, the $R^{-1}$ and $T^{-1}$\emph{expansions}, and we shall
denote them by $R_{u;v,h_1,...,h_k}^{-1}$ and
$T^{-1}_{u;v,h_1,...,h_k}$.  More specifically, for every galled
network $N$:
\begin{itemize}

\item if $N$ contains a tree node $x$ with tree children nodes
  $w_1,...,w_k$ such that they do not have any descendant node in
  common, then the $R_{u;v,h_1,...,h_k}^{-1}$ expansion can be applied
  to $N$, and $R_{u;v,h_1,...,h_k}^{-1}(N)$ is obtained by removing
  the arcs from $x$ to its children $w_1,\dots,w_k$, adding two tree
  nodes $u,v$, and $k$ hybrid nodes $h_1,...,h_k$, together with arcs
  from $x$ to $u$ and $v$, from $u$ and $v$ to every added hybrid
  node, and from $h_i$ to $w_i$ for every $i=1,...,k$;

\item if $N$ contains a hybrid node $h$, whose only child $w$ has $k$
  children tree nodes $w_1,...,w_k$ such that they do not have any
  descendant node in common, then the $T_{u;v,h_1,...,h_k}^{-1}$
  expansion can be applied to $N$, and $T_{u;v,h_1,...,h_k}^{-1}(N)$
  is obtained by removing the hybrid node $h$ and its child $w$
  together with their incoming and outgoing arcs and adding two tree
  nodes $u,v$ and $k$ hybrid nodes $h_1,...,h_k$ together with arcs
  from one parent of $h$ to $u$ and from the other parent of $h$ to
  $v$, from $u$ and $v$ to every added hybrid node, and from $h_i$ to
  $w_i$ for every $i=1,...,k$.
\end{itemize}

From these descriptions, since $w_1,...,w_k$ do not have any
descendant node in common, we easily see that the result of a $R^{-1}$
or $T^{-1}$ expansion applied to a galled network is always a galled
network.

The following result is easily deduced from the explicit descriptions
of the reductions and expansions.

\begin{lemma}
  Let $N$ and $N'$ be two galled networks.  If $N\cong N'$, then the
  result of applying to both $N$ and $N'$ the same $R^{-1}$ expansion
  (respectively, $T^{-1}$ expansion) are again two isomorphic galled
  networks.

  Moreover, if we apply a $R$ or $T$ reduction to a galled network
  $N$, then we can apply to the resulting network the corresponding
  inverse $R^{-1}$ or $T^{-1}$ expansion and the result is a galled
  network isomorphic to $N$.  \qed
\end{lemma}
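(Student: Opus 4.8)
The plan is to verify both assertions directly from the explicit descriptions of the $R$ and $T$ reductions and of the $R^{-1}$ and $T^{-1}$ expansions; the only genuine content is that the ``sites'' at which an expansion is performed are preserved by isomorphisms, and that each expansion canonically rebuilds the configuration destroyed by the reduction it inverts.

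First I would treat compatibility with isomorphisms. Let $\phi\colon N\to N'$ be an isomorphism of galled networks, and suppose an $R^{-1}$ expansion is applied to $N$ at a tree node $x$ with tree children $w_1,\dots,w_k$ having pairwise disjoint sets of descendants. Since $\phi$ is an isomorphism of directed graphs sending leaves to equally labelled leaves, $\phi(x)$ is a tree node of $N'$ whose tree children are exactly $\phi(w_1),\dots,\phi(w_k)$, and because $\phi$ preserves the ancestor–descendant relation these again have pairwise disjoint sets of descendants; hence the same $R^{-1}$ expansion is applicable at $\phi(x)$ in $N'$. One then extends $\phi$ to a map $\psi$ from $R^{-1}_{u;v,h_1,\dots,h_k}(N)$ to $R^{-1}_{u;v,h_1,\dots,h_k}(N')$ by letting $\psi$ agree with $\phi$ on the old nodes and sending the newly created nodes $u,v,h_1,\dots,h_k$ of the first expansion to the correspondingly created nodes of the second, matching each new $h_i$ with the one whose child is $\phi(w_i)$. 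The only arcs differing between $N$ and its expansion are those incident with the new nodes, and they are matched by construction, so $\psi$ is a graph isomorphism; it fixes the leaves and their labels because expansions add no leaves, so $\psi$ is an isomorphism of galled networks. The argument for $T^{-1}$ is the same, with $x$ replaced by a hybrid node $h$ whose child $w$ has $k$ tree children with pairwise disjoint descendant sets, noting that $\phi$ carries the two parents of $h$ to the two parents of $\phi(h)$.

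Next I would show that each reduction is undone by the corresponding expansion. Suppose $R_{u;v;h_1,\dots,h_k}$ is applied to $N$, where $u,v$ are siblings with common parent $p$ and common children the hybrid nodes $h_1,\dots,h_k$, and let $N_1$ be the result. By the construction of the reduction, $N_1$ contains the tree node $p$ with tree children $w_1,\dots,w_k$ (the former children of the $h_i$ in $N$), and by the closing remark in the description of the reductions these $w_i$ have pairwise disjoint sets of descendants; hence the $R^{-1}$ expansion is applicable to $N_1$ at $p$ with children $w_1,\dots,w_k$. Unwinding its definition, this expansion deletes the arcs $p\to w_i$, re-creates two tree nodes and $k$ hybrid nodes, and re-inserts exactly the arcs $p\to u$, $p\to v$, $u\to h_i$, $v\to h_i$, $h_i\to w_i$; the identity on the common nodes together with the obvious bijection between the re-created nodes and $u,v,h_1,\dots,h_k$ is then an isomorphism onto $N$. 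The case of a $T$ reduction is analogous: $T_{u;v;h_1,\dots,h_k}(N)$ contains a hybrid node $h$ whose unique child $w$ has the $k$ tree children $w_1,\dots,w_k$, again with pairwise disjoint descendant sets, and the $T^{-1}$ expansion at $h$ deletes $h$ and $w$ and rebuilds the configuration around the two former parents $x,y$ of $u,v$, yielding a galled network isomorphic to $N$.

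I expect no serious obstacle; the statement is essentially bookkeeping once the explicit local descriptions are in hand. The one point that deserves care is that each expansion is determined only up to the symmetry exchanging the two re-created tree nodes $u$ and $v$ (and, for $T^{-1}$, up to the choice of which parent of $h$ becomes the parent of $u$), so the recovered network is isomorphic, not literally equal, to $N$ — which is exactly what the statement asserts. It is also worth recording explicitly that expansions introduce no new leaves, so that preservation of leaf labels by the extended isomorphism $\psi$ is automatic and need not be checked separately.
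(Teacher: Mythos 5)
Your proposal is correct and does exactly what the paper intends: the paper states this lemma without proof as an easy consequence of the explicit descriptions of the reductions and expansions, and your argument is precisely that routine verification (transporting the expansion site along the isomorphism, and checking that the expansion rebuilds the deleted configuration, with the correct observation that the $u\leftrightarrow v$ and parent-assignment symmetries are why one only gets isomorphism rather than equality).
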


So, by \cite[Lem.~6]{CLRV:galled}, and since galled networks without
any pair of different nodes with the same nested label always have
different nested label representations, to prove that the Nakhleh's
distance $m$ separates semibinary galled networks, it is enough to
prove that the possibility of applying a reduction to a semibinary
galled network $N$ can be decided from $\lambda(N)$, and that the
nested label representation of the result of the application of a
reduction to a semibinary galled network $N$ depends only on
$\lambda(N)$ and the reduction. These two facts are given by the the
following lemmas.

\begin{lemma}
  Let $N$ be a galled network on a set $S$.

  \begin{enumerate}[(1)]
  \item If a reduction $R_{u;v,h_1,...,h_k}$ can be applied to $N$,
    then the nodes $u,v$ involved in the reduction satisfy the
    following property:
    $\lambda(u)=\lambda(v)=\{\{\lambda(w_1)\},...,\{\lambda(w_k)\}\}$
    and there is a node $x$ such that
    $\{\lambda(u),\lambda(v)\}\subseteq\lambda(x)$.
  \item Conversely, if two nodes $u,v$ satisfy the property above, and
    have minimal height among those that safisfy it, then a reduction
    $R_{u;v,h_1,...,h_k}$ can be applied to $N$.

  \item If $R_{u;v,h_1,...,h_k}$ can be applied to $N$, let $N'$ be
    the resulting network. Then $\lambda(N')$ can be computed from
    $\lambda(N)$ as follows:
    \begin{itemize}
    \item If $A\in\lambda(N)$ does not contain any $\lambda(w_i)$ at
      any level of nesting, then $A\in\lambda(N')$;
    \item If $A\in\lambda(N)$ is equal to some $\lambda(w_i)$, then
      $A\in\lambda(N')$;
    \item If $A\in\lambda(N)$ contains at some level of nesting the
      element $\{\{A_1\},\dots,\{A_k\}\}$ (with multiplicity $2$),
      where $A_i=\lambda(w_i)$, then replace it by the $k$ elements
      $A_1,\dots,A_k$ (with multiplicity $1$) to get
      $A'\in\lambda(N')$.
    \end{itemize}
  \end{enumerate}
\end{lemma}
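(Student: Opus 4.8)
The plan is to handle the three parts in order; parts (1) and (3) are careful but essentially routine unwindings of the recursive definition of the nested label, while part (2) is where the real content lies.

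For part (1), I would simply read off the consequences of the hypothesis that $R_{u;v,h_1,\dots,h_k}$ is applicable. Since $u$ and $v$ are siblings they have a common parent $x$, and being children of $x$ we get $\lambda(u),\lambda(v)\in\lambda(x)$, i.e.\ $\{\lambda(u),\lambda(v)\}\subseteq\lambda(x)$ as multisets. The common children of $u$ and $v$ are the hybrid nodes $h_1,\dots,h_k$; each $h_i$, being hybrid, has out-degree $1$, so if $w_i$ denotes its unique child then $\lambda(h_i)=\{\lambda(w_i)\}$, and therefore $\lambda(u)=\lambda(v)=\{\lambda(h_1),\dots,\lambda(h_k)\}=\{\{\lambda(w_1)\},\dots,\{\lambda(w_k)\}\}$, as claimed.

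For part (2), let $u\neq v$ be of minimal height among all pairs of distinct nodes satisfying the stated property. I would first reduce to the case that $u,v$ are siblings: by condition (ii) the label $\lambda(u)$ occurs with multiplicity at least $2$ in $\lambda(x)$ for some node $x$, so $x$ has two distinct children $u',v'$ with $\lambda(u')=\lambda(v')=\lambda(u)$, and since equal nested labels force equal heights $(u',v')$ is again a minimal pair satisfying the property; so we may take $u=u'$, $v=v'$. Next, $\lambda(u)$ has $k$ elements, each a singleton multiset, so $u$ and $v$ each have $k$ children, and every such child has out-degree $1$; since a phylogenetic network has no tree node of out-degree $1$, all these children are hybrid. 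The crux is then to show that, after matching the children of $u$ with those of $v$ by their (equal) nested labels, the matched nodes actually coincide, so that $u$ and $v$ share the hybrid children $h_1,\dots,h_k$ and, by semibinarity, each $h_i$ has exactly $u$ and $v$ as parents — which is precisely what makes $R_{u;v,h_1,\dots,h_k}$ applicable. I expect this matching step to be the main obstacle, and I would attack it as in the proof of Proposition~\ref{thm:red}: if a matched pair $a\neq b$ were distinct, then, being hybrid with the same nested label, their unique children $c,d$ would have equal nested labels and strictly smaller height than $u$; the galled condition rules out $c=d$ (two distinct hybrid parents of a hybrid node would lie as internal nodes on a reticulation cycle, forcing them to be tree nodes), and minimality of height, again together with the galled and semibinary hypotheses, is what is used to exclude $c\neq d$.

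For part (3), I would describe $N'$ explicitly and then propagate the change up the acyclic order. The vertex set of $N'$ is that of $N$ with $u,v,h_1,\dots,h_k$ removed, and the only surviving node whose multiset of children changes is the common parent $x$ of $u$ and $v$: indeed $x$ is the sole parent of each of $u,v$, and by semibinarity $u$ and $v$ are the only parents of each $h_i$, so no other surviving node loses a child. Two observations keep the bookkeeping clean. First, $x$ cannot have a third child with nested label $\lambda(u)$, since by the matching argument of part (2) this would force a common hybrid child of in-degree at least $3$, contradicting semibinarity; hence $\lambda(u)$ has multiplicity exactly $2$ in $\lambda_N(x)$, and the reduction replaces these two copies by $\lambda(w_1),\dots,\lambda(w_k)$. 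Second, each $w_i$ is a tree node whose only parent in $N$ is $h_i$ (galledness forbids a hybrid node from having a hybrid parent), and $w_i$ is a strict descendant of $x$, so the arc $x\to w_i$ added by the reduction is genuinely new and no multi-arcs arise. With this in hand the three bullets follow by induction on height: a nested label of $N$ that contains no $\lambda(w_i)$ at any level also contains no copy of $\lambda(u)=\{\{\lambda(w_1)\},\dots,\{\lambda(w_k)\}\}$ and so is unchanged (first bullet); a node whose label equals some $\lambda(w_i)$ is a strict descendant of $x$ and hence keeps its label (second bullet); and for $x$ and its ancestors, wherever the submultiset $\{\{\lambda(w_1)\},\dots,\{\lambda(w_k)\}\}$ occurs — necessarily with multiplicity $2$, at the position occupied by a copy of $\lambda(x)$ — it is replaced by the $k$ labels $\lambda(w_1),\dots,\lambda(w_k)$ (third bullet).
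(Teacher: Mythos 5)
Your overall route is the paper's: read part (1) off the definitions, reduce part (2) to showing that $u$ and $v$ actually share their hybrid children, and in part (3) track which surviving nodes change their nested label. Parts (1) and (3) are sound (and more detailed than the paper's own write-up), and in part (2) your replacement of $(u,v)$ by two genuine children $u',v'$ of $x$ is actually more careful than the paper, which simply asserts that the existence of $x$ with $\{\lambda(u),\lambda(v)\}\subseteq\lambda(x)$ makes $u$ and $v$ siblings.

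The gap is in the step you yourself single out as the crux. To rule out a matched pair $a\neq b$ of hybrid children whose unique children $c\neq d$ carry equal labels, you appeal to ``minimality of height''. But the minimality hypothesis of this lemma ranges only over pairs satisfying the stated property: a nested label of the special form $\{\{\lambda(w_1)\},\dots,\{\lambda(w_k)\}\}$ \emph{together with} a witness node $x$ containing that label twice. The pair $(c,d)$ need not satisfy either condition --- its common label need not consist of singletons, and even the minimal equal-label pair below $u$ may fail the witness condition because its members are not siblings (it would then be a $T$-pair rather than an $R$-pair) --- so no contradiction with the lemma's minimality arises. The correct argument, which the paper itself sketches in the paragraph where the reductions are introduced, uses only galledness and no minimality at all: since $\lambda(c)=\lambda(d)$, the nodes $c$ and $d$ have the same leaf descendants, hence a common descendant; a shortest pair of paths from $c$ and $d$ to a common descendant gives a hybrid node $g$ reached by internally disjoint paths, and prepending $x\to u\to a\to c$ and $x\to v\to b\to d$ produces a reticulation cycle with source $x$ and target $g$ whose internal nodes include the hybrid nodes $a$ and $b$, contradicting the galled condition (this also subsumes your $c=d$ case). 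With that substitution your proof of part (2) closes; as written, the decisive case is asserted rather than proved.
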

\begin{proof}
  If a $R$ reduction can be applied to a galled network $N$, then it
  is clear that there exists two sibling nodes $u$ and $v$ and hence a
  common parent $x$ such that $\lambda(u)=\lambda(v)$,
  $\lambda(x)\supseteq\{\lambda(u),\lambda(v)\}$. Since $u$ and $v$
  have the same children nodes, these nodes, say $h_1,\dots,h_k$, must
  hybrid.  For each $i=1,\dots,k$, let $w_i$ be the single child of
  $h_i$.  Then it is clear that
  $\lambda(u)=\lambda(v)=\{\{\lambda(w_1)\},...,\{\lambda(w_k)\}\}$.

  Conversely, assume that 
  $\lambda(u)=\lambda(v)=\{\{\lambda(w_1)\},...,\{\lambda(w_k)\}\}$. From
  the nested label definition and the minimality assumption on the
  height, this implies that $u$ and $v$ have $k$ children nodes which
  are hybrid nodes. Moreover, since there is a node $x$ such that
  $\lambda(x)\supseteq\{\lambda(u),\lambda(v)\}$, we can conclude that
  $u$ and $v$ are sibling nodes and then, we can apply a $R$ reduction
  to $N$.

  Now, if $R_{u;v,h_1,...,h_k}$ can be applied to $N$, then
  $N'=R_{u;v,h_1,...,h_k}(N)$ is the galled network obtained by
  removing the nodes $u,v,h_1,...,h_k$, together with their incoming
  arcs, and adding an arc from the parent of $u$ and $v$ to each child
  of $h_1,...,h_k$. Thus, the nested label of $w_i$ and the nested
  label of all those nodes being descendant nodes of $w_i$ for every
  $i=1,..k$ remains the same as in $N$. In the same way, the nested
  label of all those nodes not being ancestors of $w_i$ for every
  $i=1,...,k$ remains the same as in $N$, and then, they are in the
  nested label representation of $N'$. Finally, the nested label of
  the ancestors of $w_i$ for every $i=1,..,k$ must be relabeled since
  we have delete two intermediate nodes in every path from the
  ancestor to $w_i$. This implies, that we delete two levels of
  nesting, one tree node and $k$ hybrid nodes, and then, we must
  replace $\{\{\{A_1\},...,\{A_k\}\},\{\{A_1\},...,\{A_k\}\},\dots\}$
  by
  $\{A_1,...,A_k,\dots\}$.\qed 
\end{proof}

\begin{lemma}
  Let $N$ be a galled network on a set $S$.

  \begin{enumerate}[(1)]
  \item If a reduction $T_{u;v,h_1,...,h_k}$ can be applied to $N$,
    then the nodes $u,v$ involved in the reduction satisfy the
    following property:
    $\lambda(u)=\lambda(v)=\{\{\lambda(w_1)\},...,\{\lambda(w_k)\}\}$
    and there is not any node $x$ such that
    $\{\lambda(u),\lambda(v)\}\subseteq\lambda(x)$.
  \item Conversely, if two nodes $u,v$ satisfy the property above, and
    have minimal height among those that safisfy it, then a reduction
    $T_{u;v,h_1,...,h_k}$ can be applied to $N$.

  \item If $T_{u;v,h_1,...,h_k}$ can be applied to $N$, let $N'$ be
    the resulting network. Then $\lambda(N')$ can be computed from
    $\lambda(N)$ as follows:
    \begin{itemize}
    \item If $A\in\lambda(N)$ does not contain any $\lambda(w_i)$ at
      any level of nesting, then $A\in\lambda(N')$;
    \item If $A\in\lambda(N)$ is equal to some $\lambda(w_i)$, then
      $A\in\lambda(N')$;
    \item If $A\in\lambda(N)$ contains at some level of nesting the
      element $\{\{A_1\},\dots,\{A_k\}\}$, where $A_i=\lambda(w_i)$,
      then replace it by the elements $\{\{A_1,\dots,A_k\}\}$ (with
      the same multiplicity) to get $A'\in\lambda(N')$.
    \item Include also $\{A_1,\dots,A_k\}$ and $\{\{A_1,\dots,A_k\}\}$
      in $\lambda(N')$.
    \end{itemize}
  \end{enumerate}
\end{lemma}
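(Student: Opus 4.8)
The plan is to mirror, almost verbatim, the proof just given for the analogous $R$-reduction lemma, adjusting only for the fact that a $T$-reduction does not simply delete hybrid nodes but replaces the $k$ reticulation cycles through $h_1,\dots,h_k$ by a single one passing through a fresh hybrid node $h$ and its fresh tree child $w$. Throughout, as in the proof of Proposition~\ref{thm:red}, I take $u$ and $v$ to be of smallest height among all pairs of distinct nodes with the same nested label; this costs nothing, since the reductions are applied only to such pairs, and it is exactly what makes the one nontrivial part of~(1) work.

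For part~(1): if $T_{u;v,h_1,\dots,h_k}$ can be applied, then by definition $u\neq v$ are non-sibling nodes with $\lambda(u)=\lambda(v)$ whose common children $h_1,\dots,h_k$ are hybrid; as every hybrid node has out-degree~$1$, each $h_i$ has a unique child $w_i$, so $\lambda(h_i)=\{\lambda(w_i)\}$ and hence $\lambda(u)=\lambda(v)=\{\{\lambda(w_1)\},\dots,\{\lambda(w_k)\}\}$. For the remaining assertion I argue by contradiction: a node $x$ with $\{\lambda(u),\lambda(v)\}\subseteq\lambda(x)$ would have two distinct children $p\neq q$ with $\lambda(p)=\lambda(q)=\lambda(u)$; chasing the equality $\lambda(p)=\lambda(u)$ downwards through the hybrid children and their children and invoking the minimal-height choice of $u,v$ (so that these children and grandchildren are forced to coincide with the $h_i$ and the $w_i$, exactly as in Proposition~\ref{thm:red}), one gets that $p$ and $q$ are both parents of $h_1$; since $h_1$ has exactly two parents by semibinarity, $\{p,q\}=\{u,v\}$, so $x$ would be a common parent of $u$ and $v$, contradicting that they are non-siblings.

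For part~(2): if $u,v$ are nodes of minimal height satisfying the stated label property, then the non-existence of a node $x$ with $\{\lambda(u),\lambda(v)\}\subseteq\lambda(x)$ forbids in particular a common parent, so $u$ and $v$ are non-siblings; and, just as in Proposition~\ref{thm:red}, the equation $\lambda(u)=\lambda(v)=\{\{\lambda(w_1)\},\dots,\{\lambda(w_k)\}\}$ together with minimality of the height forces the $i$-th child of $u$ and of $v$ to be a single hybrid node $h_i$ with unique child $w_i$, whose only parents are then $u$ and $v$ by semibinarity, so $T_{u;v,h_1,\dots,h_k}$ applies. For part~(3) I would compute $\lambda(N')$ node by node from the explicit description of the reduction, writing $A_i=\lambda(w_i)$ and using that (from the argument in~(1), together with minimality) $u,v$ are the only nodes of $N$ with nested label $\{\{A_1\},\dots,\{A_k\}\}$ and $w_i$ is the only node with nested label $A_i$, and that the $w_i$ have pairwise disjoint sets of descendants (as noted after the definitions of the reductions), so that no $A_i$ contains any $A_j$ at a positive nesting level. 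Then: the fresh nodes acquire $\lambda_{N'}(w)=\{A_1,\dots,A_k\}$ and $\lambda_{N'}(h)=\{\{A_1,\dots,A_k\}\}$; every node of $N$ that is neither one of $u,v,h_1,\dots,h_k$ nor an ancestor of $u$ or of $v$ keeps its nested label, and these labels are exactly the $A\in\lambda(N)$ that either equal some $A_i$ or contain no $A_i$ at any level of nesting; and each former ancestor of $u$ or of $v$ has its nested label modified by replacing every (necessarily positive-level) occurrence of $\{\{A_1\},\dots,\{A_k\}\}$ --- all of which come from a path down to $u$ or to $v$, by the uniqueness above --- with $\{\{A_1,\dots,A_k\}\}$, with the same multiplicity. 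Reassembling these three families of labels gives precisely the multiset described in the statement.

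Everything here except the second half of~(1) is routine: parts~(2) and~(3) are just the structural analysis of Proposition~\ref{thm:red} together with bookkeeping of nested labels under the reduction. The one step that needs genuine care is ruling out a node $x$ with $\{\lambda(u),\lambda(v)\}\subseteq\lambda(x)$ in~(1); this condition (``no such $x$'') is exactly what distinguishes a $T$-reduction from an $R$-reduction at the level of nested labels, and its proof combines semibinarity (to bound the in-degree of $h_1$) with the minimal-height choice of $u,v$ (to pin down the descendants of the hypothetical second pair $p,q$). Without either ingredient, a $T$-reduction can be applicable while such an $x$ nevertheless exists.
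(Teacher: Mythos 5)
Your proof is correct and follows essentially the same route as the paper, whose own proof of this lemma is just the one-line remark that it ``goes the same way as the previous one'' ($R$-reduction) lemma. In fact you supply the one detail the paper leaves implicit --- ruling out a node $x$ with $\{\lambda(u),\lambda(v)\}\subseteq\lambda(x)$ by combining the semibinarity of $h_1$ with the minimal-height choice of the pair $u,v$ --- so your write-up is, if anything, more complete than the published one.
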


\begin{proof}
  The proof of this lemma goes the same way as the previous one,
  taking now into account how the nested labels are modified.\qed
\end{proof}

As a result, we get the desired result.

\begin{theorem}
  The distance $m$ defined above is a metric on the space of galled
  networks on a fixed set of labels.\qed
\end{theorem}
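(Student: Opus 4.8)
The plan is to establish the one metric axiom that does not hold for $m$ for purely formal reasons. Since $m(N,N')$ is the cardinality of a symmetric difference of multisets, it is automatically non-negative and symmetric, it vanishes when $N=N'$, and it obeys the triangle inequality; so the only thing left to prove is the separation axiom, namely that $\lambda(N)=\lambda(N')$ implies $N\cong N'$. I would argue by algebraic induction on the number $p(N)$ of unordered pairs of distinct nodes of $N$ bearing the same nested label. Note that $p(N)$ is determined by $\lambda(N)$ alone (an element of multiplicity $m$ contributes $\binom{m}{2}$ pairs), so $\lambda(N)=\lambda(N')$ already forces $p(N)=p(N')$.

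For the base case $p(N)=p(N')=0$, every nested label occurring in $N$ does so with multiplicity $1$, so $N$ has no two distinct nodes with the same nested label; by the observation opening Section~\ref{sec:distance-m-galled}, $N$ is then reconstructed up to isomorphism from $\lambda(N)$ --- its nodes being in bijection with the elements of $\lambda(N)$, an arc $(u,v)$ present exactly when $\lambda_N(v)\in\lambda_N(u)$, and the leaf labelling read off the singleton nested labels. The same holds for $N'$, so $\lambda(N)=\lambda(N')$ yields $N\cong N'$; this is precisely the already-quoted fact that galled networks without repeated nested labels have pairwise distinct nested label representations.

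For the inductive step, assume $p(N)=p(N')>0$ and fix once and for all a total order on nested labels. By Proposition~\ref{thm:red} some $R$ or $T$ reduction applies to $N$, and the two lemmas on $R$- and $T$-reductions show that which reduction to apply is a function of $\lambda(N)$ only: take the nested label value $\ell$ of least height among those of multiplicity $\geq 2$ (breaking ties by the fixed order) --- as in the proof of Proposition~\ref{thm:red}, exactly two nodes $u,v$ carry it and $\ell=\{\{A_1\},\dots,\{A_k\}\}$ with $A_i=\lambda(w_i)$ --- and choose the $R$ reduction if some element of $\lambda(N)$ contains $\ell$ with multiplicity $\geq 2$ (equivalently, $u$ and $v$ share a parent $x$ with $\{\lambda(u),\lambda(v)\}\subseteq\lambda(x)$) and the $T$ reduction otherwise. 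Moreover those lemmas give explicit rewriting rules computing the nested label representation of the reduced network from $\lambda(N)$. Hence the reduction carried in $N'$ by the nodes with nested labels $\ell$ and $A_1,\dots,A_k$ applies to $N'$ as well, and the reduced networks $N_1$ and $N_1'$ satisfy $\lambda(N_1)=\lambda(N_1')$. Since each reduction lowers $p$ by exactly one, $p(N_1)=p(N_1')=p(N)-1$, and the induction hypothesis gives $N_1\cong N_1'$. Finally, by the lemma on expansions, applying the matching $R^{-1}$ or $T^{-1}$ expansion to $N_1$ returns $N$ up to isomorphism, applying it to $N_1'$ returns $N'$, and this expansion carries isomorphic galled networks to isomorphic galled networks; therefore $N\cong N'$. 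Abstractly this is exactly the mechanism of \cite[Lem.~6]{CLRV:galled}, all of whose hypotheses are furnished by the results assembled above.

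I expect the one subtle point to be the well-definedness hidden in ``apply the same reduction to $N'$'': a reduction is nominally indexed by actual nodes of a particular network, so one must genuinely pin down a canonical reduction specified purely by nested-label data --- least height, multiplicity $2$, and the $R/T$ dichotomy read off $\lambda(N)$ --- and check that it is unambiguous. The crucial ingredient is that at most two nodes can carry a given least-height repeated label: all such nodes would have the same hybrid children (by the minimal-height argument in the proof of Proposition~\ref{thm:red}), and each of those hybrid nodes has in-degree $2$. Everything else --- that the $R$- and $T$-rewriting rules are consistent with the explicit expansion descriptions, and that the disjoint-descendants condition that keeps the expansions inside the class of galled networks holds automatically (the remark following the two reductions) --- is routine bookkeeping.
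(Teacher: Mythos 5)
Your proposal is correct and follows essentially the same route as the paper: the base case via reconstruction of a galled network with injective nested labelling from $\lambda(N)$, the inductive step on the number of equally-labelled pairs via the $R$ and $T$ reductions whose applicability and effect on $\lambda(N)$ are governed by the two lemmas, the expansion lemma to lift the isomorphism back, and the appeal to \cite[Lem.~6]{CLRV:galled} to package the induction. Your added discussion of canonically selecting the reduction from nested-label data (least height, the $R$/$T$ dichotomy, and semibinarity forcing exactly two nodes to share the minimal repeated label) only makes explicit what the paper leaves implicit.
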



\end{document}